\documentclass[letterpaper,11pt]{article}


\usepackage{fullpage}

\usepackage{amssymb}
\usepackage{amsmath}

\usepackage{subcaption}
\usepackage[round,authoryear]{natbib}
\citestyle{authordate}
\usepackage{authblk}
\usepackage{colortbl}
\definecolor{webgreen}{rgb}{0,0.4,0}
\definecolor{webbrown}{rgb}{0.6,0,0}
\definecolor{purple}{rgb}{0.5,0,0.25}
\definecolor{darkblue}{rgb}{0,0,0.7}
\definecolor{darkred}{rgb}{0.7,0,0}
\usepackage{hyperref}
\hypersetup{colorlinks,citecolor=darkblue,filecolor=black,linkcolor=darkred,urlcolor=webgreen,pdfpagemode=None,
pdfstartview=FitH}

\newcommand{\ignore}[1]{}

\newcommand{\estnum}{\hat{N}}
\newcommand{\estp}{\hat{p}}

\newcommand{\beat}{\text{Beat}}
\newcommand{\nbr}{\text{Nbr}}
\newcommand{\given}{\ | \ }
\newcommand{\whp}{\textit{w.h.p}}
\newcommand{\width}{0.37}
\newcommand{\surp}{{\tt surp}}
\newcommand{\mpfb}{{\tt MPFB}}

\newcommand{\plu}{{\tt Plu}}
\newcommand{\bor}{{\tt Bor}}
\newcommand{\veto}{{\tt Vet}}



\newtheorem{theorem}{{\sc Theorem}}
\newtheorem{definition}{{\sc Definition}}

\newtheorem{example}{{\sc Example}}

\newtheorem{assumption}{{\sc Assumption}}

\sloppy
\usepackage{cleveref}
\crefname{claim}{claim}{claims}
\crefname{fact}{fact}{facts}
\crefname{algorithm}{algorithm}{algorithms}
\crefname{observation}{observation}{observations}
\crefname{equation}{equation}{equations}
\crefname{assumption}{assumption}{assumptions}

\newenvironment{proof}{\noindent {\em Proof\/}:\enspace}
{\hfill $\blacksquare{}$ \medskip \\}

%

\DeclareMathOperator*{\argmax}{\arg\!\max}


\usepackage{pgf}
\usepackage{verbatim}
\usepackage{enumerate}
\usepackage{amssymb}
\usepackage{mathrsfs}
\usepackage{algorithm}
\usepackage[noend]{algorithmic}

\usepackage{resizegather}



\usepackage{url}

\usepackage{xspace}

\newcommand{\Nvec}{\vec{N}}

\newcommand{\EB}{\ensuremath{\mathbb E}\xspace}

\newcommand{\RB}{\ensuremath{\mathbb R}\xspace}

\newcommand{\OO}{\ensuremath{\mathcal O}\xspace}

\usepackage{nicefrac}

\newcommand{\nfrac}{\nicefrac}

\usepackage{enumitem}


\title{{\bf Surprise in Elections}}

\author[1]{Palash Dey}
\author[2]{Pravesh K. Kothari}
\author[3]{Swaprava Nath}

\affil[1]{\small Tata Institute of Fundamental Research, \texttt{palash.ju.dey@gmail.com}}
\affil[2]{\small Princeton University, \texttt{kothari@cs.princeton.edu}}
\affil[3]{\small Indian Institute of Technology Kanpur, \texttt{swaprava@cse.iitk.ac.in}}

\date{}

\begin{document}
\maketitle

\begin{abstract}
Elections involving a very large voter population often lead to outcomes that surprise many. This is particularly important for the elections in which results affect the economy of a sizable population. A better prediction of the true outcome helps reduce the surprise and keeps the voters prepared. 
This paper starts from the basic observation that individuals in the underlying population build estimates of the distribution of preferences of the whole population based on their local neighborhoods. The outcome of the election leads to a surprise if these local estimates contradict the outcome of the election for some fixed voting rule. To get a quantitative understanding, we propose a simple mathematical model of the setting where the individuals in the population and their connections (through geographical proximity, social networks etc.) are described by a random graph with connection probabilities that are biased based on the preferences of the individuals. Each individual also has some estimate of the bias in their connections.

We show that the election outcome leads to a surprise if the discrepancy between the estimated bias and the true bias in the local connections exceeds a certain threshold, and confirm the phenomenon that surprising outcomes are associated only with {\em closely contested elections}. We compare standard voting rules based on their performance on surprise and show that they have different behavior for different parts of the population. It also hints at an impossibility that a single voting rule will be less surprising for {\em all} parts of a population. Finally, we experiment with the UK-EU referendum (a.k.a.\ Brexit) dataset that attest some of our theoretical predictions.
\end{abstract}

\section{Introduction}
\label{sec:intro}

Recent times have witnessed quite a few elections whose outcomes are widely considered as surprises. News reports covered the unprecedented impact on trade, national economies, and job markets because of the results of the elections (e.g., Brexit~\citep{News2016}, US presidential elections~\citep{Independent2016}, UK parliamentary election~\citep{News2017,News2017a} etc.) particularly because many people and the market were unprepared for such an outcome. It has impacted not only the economy and made the stock markets unpredictable, the social impact was also paramount. It was clear that the social connections -- either online or offline -- and the mass communication media -- print or electronic -- that are important factors in opinion building, have a localized effect which does not give a holistic idea of the outcome of an election. 
This effect is more prominent in the online social media, since communities in social media inevitably group similar people together and it is easy to ignore biases. Having a large number of friends on an online social network may solidify the belief that the local observation is quite a representative sample than what actually is true. 
This raises a natural question:
\begin{quote}
{\em ``Can the surprise/shock in an election be explained by the social network structure or the biases in the perception of the voters?''}
\end{quote}
In this paper, we address this question by proposing a model of the social network formation and voters' perception of the winner. We show that the answer cannot be obtained from an analysis that focuses on only the network structure or only the voter perception. For instance, if we consider only network structure, the following example shows that any perception about the connection probability will always leave at least half the population surprised.
\begin{example}[Limitation of a structure-based conclusion]
\label{ex:structure}
 Suppose in a population of $n$ (even) voters with two candidates (red and blue), $\nfrac{n}{2}$ are red (meaning they prefer red over blue) and the rest $\nfrac{n}{2}$ are blue. The voting rule is plurality. Suppose the network structure is such that each voter is connected with every other voter that has the same color as hers, but is connected to exactly $\nfrac{n}{2}-1$ voters of the other color. If she perceives the winner just by counting the majority at her own neighborhood, then every voter will `think' that her favorite candidate wins, and no matter how the winning candidate is chosen, half the population will always be surprised at the outcome.
 
 Clearly, the example can be adapted if the voters discount the number of voters of their own color (given the fact that they are more likely to be connected with a similar colored voter) to yield the same conclusion. Moreover, if there are more than two candidates, an extension of the construction above will lead to a surprise of the voters in the classes where the actual winner (in plurality voting over all voters) is not their favorite candidate.
\end{example}
So, it is clear that a worst case analysis over the social network structure will always lead to surprise in election -- which is hardly the case in practice -- elections with unsurprising outcomes are in fact quite normal. Later in the paper, we discuss how error in voter perception {\em alone} also cannot give rise to surprise. 
Our approach takes into account both these factors simultaneously and provides conditions when a typical voter is surprised or not. In fact, there are some counterarguments claiming that some of these elections cannot be called `surprising' given a correct model of voter perception (e.g., \cite{Economist2016} for Brexit). 

We adopt a Bayesian approach to address the question of surprise that considers the structure generation and voter perception jointly. We assume a random generative model of the voters and the social network, and show that an error in estimating the parameters of the generative process may lead to surprises.

\subsection{Our Approach and Results}
\label{sec:contributions}

Let us define the voter generation and social network formation process a bit more formally. Consider a set of $m$ candidates and $n$ voters. A class of a voter is identified by a specific linear order over the candidates -- hence there are $m!$ classes. Each voter is picked {\em i.i.d.} from a fixed probability distribution of belonging to a class. Once the voters are generated, social network among the voters are formed according to a stochastic block model. This is a general version of an Erd\"os-Renyi random graph model, where the vertices are partitioned into classes and the edge creation probabilities (which can be different) are defined only among the classes -- hence every node of a class connects to every other node in another class with the same probability. In our model, an intra-class connection probability $p_{ii}$ is assumed to be larger than an inter-class connection probability $p_{ij}$ (where $i$ and $j$ are indices for classes). For a specific voting rule $r$, e.g., plurality, and a realization of the voters denoted by the set $V$, there is a winner which we represent using $w_T(V,r)$. Since every voting rule we consider are anonymous, i.e., winner does not change even if the voter identities are changed, the winner is determined just by the number of voters in each class. Therefore, $V$ in $w_T(V,r)$ can be replaced by $N = (N_1,N_2,\ldots,N_{m!})$, where $N_j$ is the number of voters in class $j$. The perceived winner of voter $v$ is dependent on her estimates of the number of voters in different classes, denoted by $\hat{N}^v := (\hat{N}^v_1,\hat{N}^v_2,\ldots,\hat{N}^v_{m!})$, and is given by $w_P(\hat{N}^v,r)$. Voter $v$ is surprised when $w_P(\hat{N}^v,r) \neq w_T(\Nvec,r)$. We call {\em surprise} to be the probability of this event. Voter $v$ estimates $\hat{N}^v_j$ by taking the ratio of her observed neighbors of class $j$ with her estimated connection probability with class $j$. This estimation neutralizes her observation bias had the estimates been perfect. 

With this setup, our first result (\Cref{thm:2-alt-surprise}) shows that for $m=2$, if a ratio of the estimated connection probabilities stay within a threshold, a voter is {\em not} surprised with high probability (i.e., surprise asymptotically approaching zero as $n \to \infty$). However, if the threshold is crossed, the voter is surprised \whp. A corollary of this result is that if the original distribution of the voters was very biased towards one class ({\em `overwhelming majority for one candidate'}), then, even with erroneous connection probability estimates, a voter will never be surprised \whp. This result shows that voters' perception error is not solely responsible for surprise. Together with \Cref{ex:structure}, we conclude that social connection and voter perception are intertwined reasons for surprise in elections.

Having observed that {\em surprise is a phenomenon of a closely contested election}, we generalize our results for more than two candidates. As a first approach, we present the case with three candidates in \S\ref{sec:three-candidates}. However, the method clearly generalizes with similar assumptions to similar conclusions with more candidates. Unlike the case with two candidates, for three candidates, one can consider different voting rules and compare their performances w.r.t.\ surprise. We consider three prominent voting rules (that are scoring rules). Our next result (\Cref{thm:3-candidate}) shows that for different classes of voters, different rules perform better in terms of surprise -- and hints that there may not be a single surprise-optimal voting rule for all classes of voters. However, we find it interesting that the performance is not proportional to the distribution of the mass in the scoring rules since in certain class of the voters, both plurality and veto perform better than Borda voting. All voting rules are explained when presented.

Though the theoretical results in \S\ref{sec:theory} use the estimates of the connection probabilities and show that the correctness of those estimates w.r.t.\ the true values may surprise a voter, we do not explicitly mention how the voters arrive at these estimates. In \S\ref{sec:empirical}, we consider a real dataset (UK-EU referendum, a.k.a.\ Brexit) and consider a realistic model of network formation and voters' winner anticipation, that is a realistic instantiation of our theoretical model. We investigate the effect of intra and inter-class connection probabilities, and the effect of noisy observation of their estimates on surprise. We find that the conclusions in those results show a resemblance with some of the theoretical predictions.
We present the proofs in an online appendix \citep{Authors2018} due to page limitation.

\section{Related Work}
\label{sec:related}

Public elections and their outcomes had been one of the cornerstones of research in social choice theory. In the computational social choice and multi-agent systems literature, there had been several notions to measure the `goodness' of elections. For example, {\em margin of victory}, defined as the smallest number of voters who can alter the outcome of an election by voting differently~\citep{xia2012computing,dey2015estimating}, provides a quantitative threshold of surprising outcomes in terms of the voter population. A related literature exists for {\em bribery in election}~\citep[e.g.]{faliszewski2006complexity,elkind2009swap,mattei2013bribery,bredereck2016large} and complexity of manipulative attacks~\citep[e.g.]{bartholdi1989computational,conitzer2007elections,faliszewski2014complexity,parkes2012complexity}.
Surprise in election, to the best of our knowledge, has not been formally studied in this literature. There is a relevant body of literature in political economy.
\citet{ely2015suspense} formally define {\em suspense} and {\em surprise} in a dynamical model and provide a design approach to maximize either of them for a Bayesian audience. The motivation for the dynamical model comes from the examples of mystery novels, political primaries, casinos, game shows, auctions, and sports. Our definition of surprise (the outcome is contrary to a voter's belief) is closely related in spirit, and is adapted to a single-shot decision. In sports tournaments, it is important to design the schedule so that the games are highly competitive and results are unpredictable \citep{dagaev2015seeding,olson2014suspense}. In fact, {\em information design}, where a social planner aims to maximize the unpredictability of a contest has been investigated in various contexts (see, e.g., a recent survey by \citet{bergemann2017information}).
But in election outcomes stability is of prime importance~\citep{pattanaik1973stability,dummett1961stability,rubinstein1980stability}. 
The social connection model in our paper is inspired by stochastic block model. This model has a long tradition of study in the social sciences and computer science~\citep{karrer2011stochastic,holland1983stochastic,wasserman1994social}. 
Therefore, in this paper, we approach the question of surprise in election using well studied models of social connection and surprise, and introduce a voter perception model to present insightful results.

\section{Model}
\label{sec:model}

Let $[k] \triangleq \{1,\ldots,k\}$. Let $N = [n]$ be the set of \emph{voters}, and $M = \{a_1,\ldots,a_m\}$ be the set of \emph{candidates}.
Every voter has an ordinal preference over the candidates, and we assume that these preference relations are total orders, i.e., transitive, anti-symmetric, and complete. We assume $m << n$, which is representative of real elections. Since the number of preference orders can be at most $m!$, we partition the voters into disjoint {\em classes} identified by $P_k, k \in C$, with $C = [m!]$ being the indices of the classes. Voters in a given class share the same preference order. Let $\Nvec := (|P_k|, k \in C)$ denote the vector of the number of voters in each class. With a slight abuse of notation, we will refer to the preference of the voters in $P_k$ also with the same notation.

Every voter is associated with class $P_j$ with probability $\epsilon_j$ independently from other voters, where $\epsilon_j \in [0,1], \ \forall j \in C$, and $\sum_{j \in C} \epsilon_j = 1$. We assume that the $\epsilon_j$'s are unknown to the voters. The association is represented by the mapping $\sigma : N \to C$, which maps the voter identities to the class indices.
A random social network is formed with these voters by a stochastic block model which is represented by a $|C| \times |C|$ symmetric matrix $P = [p_{jk}]$, where $p_{jk}$ denotes the connection probability between the classes of voters $P_j$ and $P_k$. In this connection model, the probability of connections for every voter in a class with every voter in another class is identified by a single parameter, which may change for a different pair of classes.
The resulting graph is denoted by $G = (N,E)$, where $E$ is the edge set. 
The edge creation process is independent among each other and also is independent with the voter-to-class association process.
We assume a regularity among the connection probabilities for which we need to define a distance metric.\footnote{A valid distance metric is one that is (1) non-negative, (2) symmetric, and (3) obeys triangle inequality.}
 The Kendall-Tau (KT) distance between two preference orderings $P_j$ and $P_k$ is the minimum number of adjacent flip of candidates needed to reach one from the other.
Clearly, this is a valid distance metric. We call the $p_{jk}$'s {\em regular} if they are monotone decreasing with increasing KT distance between $P_j$ and $P_k$ -- which means that the voters with more dissimilar preferences are less likely to be connected. We assume that a voter knows the preferences of her immediate neighbors (on the social network) perfectly, but does know the preferences of the other voters.

A voter $v \in P_j$ estimates these connection probabilities which are denoted by $\estp_{jk}$ for all $k \in C$. We assume that the voters' estimated $\estp_{jk}$'s are also regular. At this point, we do not assume a model on how the voters reach their estimates. In \S\ref{sec:empirical}, we consider a specific model of estimates for the experiments where voters take weighted average of their own observations and a noisy version of the true global distribution. The next section deals with how the errors in these estimates can affect a voters perception of the winner. We will consider only deterministic voting rules. 

\smallskip
\noindent {\bf Voters' winner perception model:}
Voter $v$ estimates the number of voters in class $P_k$ by dividing the number of her own neighbors in that class on $G$, defined as $\nbr^k_v:= \{t: (v,t) \in E, t \in P_k\}$, with her estimated $\estp_{\sigma(v)k}$. Hence voter $v$'s estimated number of voters in class $P_k$ is,
\begin{equation}
 \label{eq:estimated_N}
 \estnum_v^k = \left\{
      \begin{array}{ll}
       \frac{1}{\estp_{\sigma(v)k}} |\nbr^k_v| & \text{ if } k \neq \sigma(v), \\
        \frac{1}{\estp_{\sigma(v)\sigma(v)}} |\nbr^{\sigma(v)}_v| + 1 & \text{ otherwise,}
      \end{array}
 \right.
\end{equation}
Note that if the $\estp_{\sigma(v)k}$'s were accurate, by strong law of large numbers, this estimate gives the right number of voters in each class asymptotically {\em almost surely}.

The voters now have randomly realized preferences and connections with each other. Also, every voter $v$ has an estimate of the number of voters in different classes, and therefore, under a given (anonymous) voting rule $r$, her perceived winner is denoted by $w_P({\estnum}^v,r)$, where ${\estnum}^v:= (\hat{N}^v_1,\hat{N}^v_2,\ldots,\hat{N}^v_{|C|})$.
The true winner for the same realization is denoted by $w_T(\Nvec,r)$.
%
A voter is {\em surprised} when her perceived winner is different from the true winner, defined formally as follows.
\begin{definition}[Event of Surprise]
 \label{def:surprise}
 An {\em event of surprise} of a voter $v$ for a specific realization of the voter preferences and social graph is the event where the voter's perceived winner is not the true winner, i.e., the event $S_v$ such that,
 \begin{equation}
  S_v^r := \{w_P({\estnum}^v,r) \neq w_T(\Nvec,r)\}.
 \end{equation}
\end{definition}
We will call the probability of this event as {\em surprise} of voter $v$ under voting rule $r$, denote by $\surp_v^r := P(S_v^r)$.

Note that, the event of surprise is specific to a voter, but every voter in a given class has same surprise in this model, while voters in different classes may have different surprises for the same parameters. 

\smallskip
\noindent {\bf Metric to compare voting rules:}
Let the event of some candidate $b \ (\neq w_T(\Nvec, r))$ beating the true winner $w_T(\Nvec, r)$ be defined as $\beat_v^r(b,w_T(\Nvec, r)) := \{b \text{ beats } w_T(\Nvec, r) \text{ in } r\}$.
The event of surprise, therefore, can be written as $S_v^r = \cup_{b \neq w_T(\Nvec, r)} \beat_v^r(b,w_T(\Nvec, r))$.
For the chosen parameters, define the {\em most probable false beating candidate} as ${b^r_v}^* \in \argmax_{b \neq w_T(\Nvec, r)} P(\beat_v^r(b,w_T(\Nvec, r)))$, with ties broken arbitrarily. Using the union bound and the fact that the probability of an union of events is always larger than that of the largest probability of the individual events, we get,
\begin{equation}
 \label{eq:union-bound}
 \begin{split}
  P(S_v^r) &= \surp_v^r \in [\ell_v^r, (m-1) \ell_v^r], \\
  & \quad \text{ where } \ell_v^r = P(\beat_v^r({b^r_v}^*,w_T(\Nvec, r))).
 \end{split}
\end{equation}
It is enough to analyze the event $\beat_v^r({b^r_v}^*,w_T(\Nvec, r))$ and consider the quantity $\mpfb_v^r := \ell_v^r$, which we will call the {\em most probable false beating (MPFB)} factor, to compare between different voting rules, since surprise can vary at most by a constant factor of this MPFB factor. In the following sections, we will see that the effect of the number of voters on this factor is in the exponent. Since the number of voters is large, the conclusions on surprise are entirely dictated by the growth or decay of the MPFB factor.

\section{Theoretical Results}
\label{sec:theory}

In this section, we first analyze the setting with two candidates to get a better insight. The set of candidates is $M = \{a_1,a_2\}$ and the classes are $P_1 = a_1 \succ a_2$ and $P_2 = a_2 \succ a_1$. WLOG, we assume that $\epsilon_1 = \frac{1}{2} + \epsilon$ and $\epsilon_2 = \frac{1}{2} - \epsilon$ with $0 < \epsilon < 1/2$. For two candidates, all standard voting rules yield the same winner as the plurality rule, and therefore, we will be considering only plurality in the case of two candidates. We first show that candidate $a_1$ emerges as winner in plurality \whp.

\begin{theorem}
\label{thm:2-alt-winner}
 When voters fall in class $P_1$ and $P_2$ w.p.\ $\frac{1}{2} + \epsilon$ and $\frac{1}{2} - \epsilon$ respectively, with $0 < \epsilon < 1/2$, $P(w_T(\Nvec, \plu) = a_2) \leqslant e^{-\sqrt{n}/2}$ for sufficiently large $n$.
\end{theorem}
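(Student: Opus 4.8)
The plan is to reduce the event $\{w_T(\Nvec,\plu)=a_2\}$ to a single lower-tail estimate for a binomial random variable and then apply a standard concentration inequality. Since $m=2$ there are exactly two classes and every voter lies in precisely one of them, so $N_1+N_2=n$ and $N_1\sim\mathrm{Binomial}(n,\tfrac12+\epsilon)$. Under plurality with two candidates the winner is simply the class containing more voters, so $a_2$ can be the true winner only when $N_2\geqslant N_1$, equivalently $N_1\leqslant n/2$; this containment holds regardless of how ties are broken. Hence it suffices to bound
\begin{equation}
P(w_T(\Nvec,\plu)=a_2)\leqslant P\!\left(N_1\leqslant n/2\right).
\end{equation}

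Next I would note that $\EB[N_1]=n(\tfrac12+\epsilon)=\tfrac n2+n\epsilon$, so $\{N_1\leqslant n/2\}$ is exactly the event that $N_1$ falls at least $n\epsilon$ below its mean. Writing $N_1$ as a sum of $n$ i.i.d.\ $\{0,1\}$ indicators and applying Hoeffding's inequality gives
\begin{equation}
P\!\left(N_1\leqslant n/2\right)=P\!\left(N_1-\EB[N_1]\leqslant -n\epsilon\right)\leqslant e^{-2(n\epsilon)^2/n}=e^{-2n\epsilon^2}.
\end{equation}

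Finally I would convert this $\epsilon$-dependent bound into the stated uniform form. For any fixed $\epsilon\in(0,\tfrac12)$ the inequality $2n\epsilon^2\geqslant \sqrt n/2$ holds as soon as $\sqrt n\geqslant 1/(4\epsilon^2)$, i.e.\ for all $n\geqslant 1/(16\epsilon^4)$; for every such $n$ the right-hand side above is at most $e^{-\sqrt n/2}$, which is precisely the claim. This explains the ``for sufficiently large $n$'' qualifier: the Hoeffding estimate is in fact exponentially stronger than the asserted bound for constant $\epsilon$, and the weaker $\sqrt n$ threshold is stated presumably so the same clean bound can be quoted uniformly later (and remains valid even if $\epsilon$ is allowed to shrink as slowly as $\tfrac12 n^{-1/4}$).

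I do not expect a genuine obstacle here, since the statement is essentially a one-line Chernoff/Hoeffding tail bound. The only points needing a moment of care are (i) making the reduction to $\{N_1\leqslant n/2\}$ robust to the tie-breaking convention, which is immediate because the winning event is contained in $\{N_2\geqslant N_1\}$ either way, and (ii) the final inequality $2n\epsilon^2\geqslant\sqrt n/2$ that pins down how large $n$ must be.
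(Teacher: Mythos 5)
Your proof is correct and follows essentially the same route as the paper's: both reduce the event $\{w_T(\Nvec,\plu)=a_2\}$ to a one-sided Hoeffding tail bound and arrive at the same threshold $n\geqslant 1/(16\epsilon^4)$. The only cosmetic difference is parametrization --- you bound $N_1\leqslant n/2$ directly at the full deviation $n\epsilon$ (obtaining the sharper $e^{-2n\epsilon^2}$ and then weakening it), whereas the paper works with the signed sum $Z=X_2-X_1$ and plugs in the deviation $t=n^{3/4}$ to produce the exponent $\sqrt{n}/2$ immediately.
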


\begin{proof}
 Let $X_i$ denote the number of voters in $P_i, \ i \in [2]$. Hence
\[X_i = \sum_{v \in N} \mathbb{I} \{v \in P_i\}, \ i \in [2].\]
Define, $$Z := X_2 - X_1 = \sum_{v \in N} [\mathbb{I} \{v \in P_2\} - \mathbb{I} \{v \in P_1\}] =: \sum_{v \in N} Z_v.$$
Where $Z_v := \mathbb{I} \{v \in P_2\} - \mathbb{I} \{v \in P_1\}, \ v \in N$ are i.i.d. RVs taking values $-1$ w.p.\ $\frac{1}{2}+\epsilon$ and $1$ w.p.\ $\frac{1}{2}-\epsilon$. Clearly, $\{w_T(\Nvec, \plu) = a_2\} \implies \{Z \geqslant 0\}$. We see that $\EB Z = - 2 n \epsilon$. Using Hoeffding bound, we get
\begin{align*}
 \Pr (Z - \EB Z  \geqslant t) \leqslant e^{-\frac{t^2}{2n}}.
\end{align*}
Pick $t = n^{3/4}$. Then for $n \geqslant \frac{1}{16\epsilon^4}$, $\EB Z + n^{3/4} = - 2 n \epsilon + n^{3/4} \leqslant 0$. Hence, for $n \geqslant \frac{1}{16\epsilon^4}$, we get
\begin{align*}
 \Pr(w_T(\Nvec, \plu) = a_2) \leqslant \Pr (Z \geqslant 0) \leqslant \Pr (Z \geqslant \EB Z + n^{3/4}) \leqslant e^{-\frac{\sqrt{n}}{2}}.
\end{align*}
\end{proof}
Since candidate $a_1$ turns out to be the true winner \whp., we will consider only the conditional probability that $a_2$ is the perceived winner given $a_1$ being the true winner, which will approximately be equal to surprise for large $n$. 

\begin{theorem}[Surprise for two candidates]
\label{thm:2-alt-surprise}
When voters fall in class $P_1$ and $P_2$ w.p.\ $\frac{1}{2} + \epsilon$ and $\frac{1}{2} - \epsilon$ respectively, with $0 < \epsilon < 1/2$, we have the following.
 \begin{itemize}
  \item For voter $v$ in $P_1$,
      \begin{itemize}
       \item if $\frac{\estp_{11}}{\estp_{12}} > \frac{p_{11}}{p_{12}} \frac{1/2 + \epsilon}{1/2 - \epsilon}$, then $P(w_P({\estnum}^v, \plu) = a_2 \given w_T(\Nvec, \plu) = a_1) \geqslant 1 - 2e^{-2 \left( \frac{\estp_{11} \estp_{12}}{\estp_{11} + \estp_{12}} \right)^2 \sqrt{n}}$ for large enough $n$; hence, $\surp_v^\plu \stackrel{n \to \infty}{\to} 1$, i.e., voter $v$ is surprised \whp.
       \item if $\frac{\estp_{11}}{\estp_{12}} < \frac{p_{11}}{p_{12}} \frac{1/2 + \epsilon}{1/2 - \epsilon}$, then $P(w_P({\estnum}^v, \plu) = a_2 \given w_T(\Nvec, \plu) = a_1) \leqslant e^{-2 \left( \frac{\estp_{11} \estp_{12}}{\estp_{11} + \estp_{12}} \right)^2 \sqrt{n}}$ for large enough $n$; hence, $\surp_v^\plu \stackrel{n \to \infty}{\to} 0$, i.e., voter $v$ is not surprised \whp.
      \end{itemize}
 \item For voter $v$ in $P_2$,
      \begin{itemize}
       \item if $\frac{\estp_{22}}{\estp_{21}} < \frac{p_{22}}{p_{21}} \frac{1/2 - \epsilon}{1/2 + \epsilon}$, then $P(w_P({\estnum}^v, \plu) = a_2 \given w_T(\Nvec, \plu) = a_1) \geqslant 1 - 2e^{-2 \left( \frac{\estp_{22} \estp_{21}}{\estp_{22} + \estp_{21}} \right)^2 \sqrt{n}}$ for large enough $n$; hence, $\surp_v^\plu \stackrel{n \to \infty}{\to} 1$, i.e., voter $v$ is surprised \whp.
       \item if $\frac{\estp_{22}}{\estp_{21}} > \frac{p_{22}}{p_{21}} \frac{1/2 - \epsilon}{1/2 + \epsilon}$, then $P(w_P({\estnum}^v, \plu) = a_2 \given w_T(\Nvec, \plu) = a_1) \leqslant e^{-2 \left( \frac{\estp_{22} \estp_{21}}{\estp_{22} + \estp_{21}} \right)^2 \sqrt{n}}$ for large enough $n$; hence, $\surp_v^\plu \stackrel{n \to \infty}{\to} 0$, i.e., voter $v$ is not surprised \whp.
      \end{itemize}
 \end{itemize}
\end{theorem}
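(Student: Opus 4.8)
The plan is to reduce the perceived-winner event to a single scalar comparison and then apply a Hoeffding bound exactly in the style of \Cref{thm:2-alt-winner}. I will carry out the argument for a voter $v \in P_1$; the $P_2$ case is symmetric. Under plurality with two candidates, $w_P(\estnum^v, \plu) = a_2$ holds precisely when $\estnum^v_2 > \estnum^v_1$, i.e.\ when
\[ D := \estnum^v_2 - \estnum^v_1 = \tfrac{1}{\estp_{12}}|\nbr^2_v| - \tfrac{1}{\estp_{11}}|\nbr^1_v| - 1 > 0. \]
The key structural observation is that $D+1 = \sum_{t \neq v} D_t$ is a sum of $n-1$ i.i.d.\ contributions, where $D_t = 1/\estp_{12}$ if $t \in P_2$ and $(v,t) \in E$, $D_t = -1/\estp_{11}$ if $t \in P_1$ and $(v,t) \in E$, and $D_t = 0$ otherwise. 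Independence holds because the class draw of each $t \neq v$ and the edge $(v,t)$ are mutually independent and independent across $t$; this is what lets me treat $D$ with a single concentration inequality.

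First I would compute the drift. Each contribution satisfies
\[ \EB[D_t] = \frac{(1/2-\epsilon)p_{12}}{\estp_{12}} - \frac{(1/2+\epsilon)p_{11}}{\estp_{11}}, \]
so $\EB[D_t] > 0$ exactly when $\estp_{11}/\estp_{12} > (p_{11}/p_{12})\cdot(1/2+\epsilon)/(1/2-\epsilon)$ and $\EB[D_t] < 0$ exactly when the reversed inequality holds. This is where the threshold in the statement comes from. Since $\EB[D] = (n-1)\EB[D_t] - 1 = \Theta(n)$ carries the sign of $\EB[D_t]$, the additive constant $-1$ is asymptotically irrelevant.

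Next I would control the fluctuations of $D$. Each $D_t$ lies in the interval $[-1/\estp_{11}, 1/\estp_{12}]$, of length $1/\estp_{11}+1/\estp_{12}$, so Hoeffding's inequality bounds one-sided deviations of $D$ from $\EB[D]$ by $\exp(-2t^2/[(n-1)(1/\estp_{11}+1/\estp_{12})^2])$. Choosing $t = n^{3/4}$, exactly as in \Cref{thm:2-alt-winner}, turns this exponent into $2(\estp_{11}\estp_{12}/(\estp_{11}+\estp_{12}))^2\sqrt n$ (the $(n-1)$-versus-$n$ discrepancy only helps) and ensures $\EB[D]\pm n^{3/4}$ keeps the sign of $\EB[D]$ for all large $n$. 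In the threshold-crossed regime $\EB[D] > 0$, so $\{D \leq 0\}$ demands a downward deviation of at least $n^{3/4}$ and $\Pr(w_P(\estnum^v,\plu)=a_2) = \Pr(D > 0) \geq 1 - e^{-2(\estp_{11}\estp_{12}/(\estp_{11}+\estp_{12}))^2\sqrt n}$; in the reversed regime the same estimate gives $\Pr(D>0) \leq e^{-2(\estp_{11}\estp_{12}/(\estp_{11}+\estp_{12}))^2\sqrt n}$.

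Finally I would convert these unconditional bounds into the conditional ones in the statement using \Cref{thm:2-alt-winner}, which gives $\Pr(w_T(\Nvec,\plu)=a_1)\geq 1 - e^{-\sqrt n/2}$. I expect this conditioning step to be the main obstacle, because $w_P$ and $w_T$ are dependent (they share the labels of the voters $t \neq v$) and the constants must be made to line up exactly. For the lower bound I would use $\Pr(w_P=a_2 \given w_T=a_1) \geq \Pr(w_P=a_2)-\Pr(w_T=a_2)$, and for the upper bound $\Pr(w_P=a_2\given w_T=a_1)\leq \Pr(w_P=a_2)/\Pr(w_T=a_1)$. The clean reconciliation rests on the elementary fact that $\estp_{11},\estp_{12}<1$ forces $2(\estp_{11}\estp_{12}/(\estp_{11}+\estp_{12}))^2 < 1/2$, so the $e^{-\sqrt n/2}$ term from \Cref{thm:2-alt-winner} is dominated by the Hoeffding term; this absorbs the extra error into the stated factor $2$ in the lower bound, while in the upper bound the linear size of $-\EB[D]$ actually yields an $e^{-\Theta(n)}$ rate that comfortably swallows the $1/\Pr(w_T=a_1)\to 1$ factor for large $n$. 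The $P_2$ case is identical after replacing $(\estp_{11},\estp_{12},p_{11},p_{12})$ by $(\estp_{22},\estp_{21},p_{22},p_{21})$, with the surprise inequality flipping direction because $a_2$ is now the voter's favourite.
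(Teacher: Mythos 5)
Your proposal is correct, and its core is the same as the paper's: write the difference of the two population estimates as a sum of i.i.d.\ bounded contributions (one per voter $t \neq v$, taking values $1/\estp_{12}$, $-1/\estp_{11}$, or $0$), read the surprise threshold off the sign of the per-voter drift, and apply Hoeffding with deviation $t = n^{3/4}$ to get the $\exp\bigl(-2(\estp_{11}\estp_{12}/(\estp_{11}+\estp_{12}))^2\sqrt{n}\bigr)$ rate. Where you genuinely depart from the paper is the final conditioning step, and your treatment is the more rigorous one. The paper simply chains the conditional probability $P(w_P(\estnum^v,\plu)=a_2 \given w_T(\Nvec,\plu)=a_1)$ against the \emph{unconditional} Hoeffding bounds on $Z$, leaving implicit that conditioning on $\{w_T = a_1\}$ (an event sharing the class labels with $Z$) is harmless; its factor of $2$ in the lower bound comes from using the two-sided deviation $\Pr(|Z-\EB Z|\le t)$. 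You instead keep one-sided Hoeffding bounds and reconcile conditional with unconditional probabilities explicitly, via $\Pr(A \given B) \ge \Pr(A) - \Pr(B^c)$ and $\Pr(A \given B) \le \Pr(A)/\Pr(B)$ together with \Cref{thm:2-alt-winner}; your factor of $2$ then arises from absorbing the $e^{-\sqrt{n}/2}$ error using the elementary bound $\estp_{11}\estp_{12}/(\estp_{11}+\estp_{12}) < 1/2$, and in the non-surprised regime you exploit the $\Theta(n)$ drift to get an $e^{-\Theta(n)}$ rate that swallows the $1/\Pr(w_T=a_1)$ factor. This buys a proof of exactly the stated conditional bounds without the paper's unjustified first inequality, at the cost of one extra (but easy) lemma-level observation.
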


\begin{proof}
 We prove the result only for the case when $v\in P_2$, since the other case is symmetric. Define $\theta=1/2 + \epsilon$. Let the random graph formed according to the stochastic model is denoted by $G = (N, E)$. For $i\in[2]$, let $X_i$ be the set of voters denoting the neighbors of $v$ that belong to class $P_i$. Hence, $v$'s estimated number of voters in classes $P_1$ and $P_2$ are $\frac{|X_1|}{\estp_{21}}$ and $\frac{|X_2|}{\estp_{22}} + 1$ respectively. The additional one voter in the estimate of $P_2$ comes from voter $v$ counting herself. Hence
\begin{align}
 \frac{|X_1|}{\estp_{21}} &= \frac{1}{\estp_{21}} \sum_{u \in N} \mathbb{I} (\{(vu) \in E\} \cap \{u \in P_1\}), \label{eq:X-def1}\\
 \frac{|X_2|}{\estp_{22}} &= \frac{1}{\estp_{22}} \sum_{u \in N \setminus \{v\}} \mathbb{I} (\{(vu) \in E\} \cap \{u \in P_2\}). \label{eq:X-def2}
\end{align}
Taking expectations over these quantities, we get, 
\begin{align*}
 \mathbb{E} \left ( \frac{|X_1|}{\estp_{21}} \right ) &= \frac{1}{\estp_{21}} \sum_{u \in N} P (u \in P_1) \cdot P ((vu) \in E) \ | \ u \in P_1)= n \ \theta  \ \frac{p_{21}}{\estp_{21}} \qquad \text{ and,} \\
 \mathbb{E} \left ( \frac{|X_2|}{\estp_{22}} \right ) &= \frac{1}{\estp_{22}} \sum_{u \in N \setminus \{v\}} P (u \in P_2) \cdot P ((vu) \in E) \ | \ u \in P_2) = (n-1) \ (1 - \theta) \ \frac{p_{22}}{\estp_{22}}.
\end{align*}
Define a new random variable, $Z := \frac{|X_2|}{\estp_{22}} + 1 - \frac{|X_1|}{\estp_{21}}$. Its expectation is
\begin{align}
 \mathbb{E} Z &= (n-1) (1 - \theta) \frac{p_{22}}{\estp_{22}} + 1 - n \theta \frac{p_{21}}{\estp_{21}} \nonumber \\
 &= (n-1) \left ( \frac{1}{2} - \epsilon \right ) \frac{p_{22}}{\estp_{22}} + 1 - n \left ( \frac{1}{2} + \epsilon \right ) \frac{p_{21}}{\estp_{21}} \nonumber \\
 &= n \left [ \left ( \left ( \frac{1}{2} - \epsilon \right ) \frac{p_{22}}{\estp_{22}} - \left ( \frac{1}{2} + \epsilon \right ) \frac{p_{21}}{\estp_{21}} \right )  + \frac{1}{n} \left( 1 - \left ( \frac{1}{2} - \epsilon \right ) \frac{p_{22}}{\estp_{22}} \right )\right ]. \label{eq:expX_i}
\end{align}

We first consider the case when $\frac{\estp_{22}}{\estp_{21}} > \frac{p_{22}}{p_{21}} \cdot \frac{1/2 - \epsilon}{1/2 + \epsilon}$.

The first term in the bracket in \Cref{eq:expX_i} is negative since $\frac{\estp_{22}}{\estp_{21}} > \frac{p_{22}}{p_{21}} \cdot \frac{1/2 - \epsilon}{1/2 + \epsilon}$, by assumption. Let $ -\ell = \left ( \frac{1}{2} - \epsilon \right ) \frac{p_{22}}{\estp_{22}} - \left ( \frac{1}{2} + \epsilon \right ) \frac{p_{21}}{\estp_{21}}$. Hence the whole expression of \Cref{eq:expX_i} is negative for $n > \max \{ 0, \left( 1 - \left ( \frac{1}{2} - \epsilon \right ) \frac{p_{22}}{\estp_{22}} \right ) / \ell \} =: n_0 $. Hence, $\mathbb{E} Z$ is negative for sufficiently large $n$. Note from \Cref{eq:X-def1,eq:X-def2} that $Z$ can also be written as the sum over the differences of the indicator functions. We will use Hoeffding's bound since the random variables in the sum are independent. The maximum of every term in that sum of indicators that represent $Z$ can be $1/\estp_{22}$ and the minimum can be $-1/\estp_{21}$, hence the maximum difference between each of the summands is $(\estp_{22} + \estp_{21}) / \estp_{22} \estp_{21}$. We have,
\begin{align}
\label{eq:Hoeffding}
 \Pr(w_P({\estnum}^v, \plu) = a_2 \given w_T(\Nvec, \plu) = a_1) \le \Pr(Z - \mathbb{E} Z > t) &\leqslant  e^{-2\left(\frac{ \estp_{22} \estp_{21}}{\estp_{22} + \estp_{21}}\right)^2 \cdot \frac{t^2}{n}}.
\end{align}
Plugging in $t = n^{3/4}$, we get that the probability of $Z > \mathbb{E} Z + n^{3/4}$ is at most $e^{-2\left(\frac{ \estp_{22} \estp_{21}}{\estp_{22} + \estp_{21}}\right)^2 \sqrt{n}}$. Let $n_1 := \inf \{n > 0 : \left ( \left ( \frac{1}{2} - \epsilon \right ) \frac{p_{22}}{\estp_{22}} - \left ( \frac{1}{2} + \epsilon \right ) \frac{p_{21}}{\estp_{21}} \right )  + \frac{1}{n} \left( 1 - \left ( \frac{1}{2} - \epsilon \right ) \frac{p_{22}}{\estp_{22}} \right ) + \frac{1}{n^{1/4}} < 0\}$. The number $n_1$ is guaranteed to exist since $\frac{\estp_{22}}{\estp_{21}} > \frac{p_{22}}{p_{21}} \cdot \frac{1/2 - \epsilon}{1/2 + \epsilon}$, by assumption. Therefore for all $n > n_1$, $Z$ is greater than a negative quantity with probability at most $e^{-2\left(\frac{ \estp_{22} \estp_{21}}{\estp_{22} + \estp_{21}}\right)^2   \sqrt{n}}$. Since $\{ Z > 0 \} \subset \{ Z > -ve \}$, we have that $\forall n > n_1$, $\Pr(Z > 0) \leqslant e^{-2\left(\frac{ \estp_{22} \estp_{21}}{\estp_{22} + \estp_{21}}\right)^2 \sqrt{n}}$.

We now consider the case when $\frac{\estp_{22}}{\estp_{21}} < \frac{p_{22}}{p_{21}} \frac{1/2 - \epsilon}{1/2 + \epsilon}$. We leverage the calculations we did for the previous case. Because of the assumption $\frac{\estp_{22}}{\estp_{21}} < \frac{p_{22}}{p_{21}} \cdot \frac{1/2 - \epsilon}{1/2 + \epsilon}$, $\mathbb{E} Z$ is positive for large $n$ (\Cref{eq:expX_i}). Using \Cref{eq:Hoeffding}, we have,
 \[\Pr(w_P({\estnum}^v, \plu) = a_2 \given w_T(\Nvec, \plu) = a_1) \le \Pr(|Z - \mathbb{E} Z| \leqslant t) \geqslant 1 - 2 e^{-2\left(\frac{ \estp_{22} \estp_{21}}{\estp_{22} + \estp_{21}}\right)^2 \cdot \frac{t^2}{n}}.\]
 This implies that the probability of $Z \geqslant \mathbb{E} Z - t$ is at least the quantity on the RHS of the above inequality. Again, plugging in $t = n^{3/4}$ and defining $n_2 := \inf \{n > 0 : \left ( \left ( \frac{1}{2} - \epsilon \right ) \frac{p_{22}}{\estp_{22}} - \left ( \frac{1}{2} + \epsilon \right ) \frac{p_{21}}{\estp_{21}} \right )  + \frac{1}{n} \left( 1 - \left ( \frac{1}{2} - \epsilon \right ) \frac{p_{22}}{\estp_{22}} \right ) - \frac{1}{n^{1/4}} > 0\}$, which is guaranteed to exist by assumption, we get the desired conclusion for all $n>n_2$. This completes the proof.
\end{proof}

\paragraph{Corollaries.}
\Cref{thm:2-alt-surprise} captures the determining factors for surprise in plurality voting. Few conclusions are in order.
\begin{enumerate}
 \item If an agent's estimated $\estp$'s were perfect, then the agent is never surprised \whp., since then the ratios will always satisfy the `not surprised' condition of \Cref{thm:2-alt-surprise}.
 \item Surprise may happen when $\epsilon$ is small, i.e., the winning margin is small. This is because, the surprise-determining thresholds for $p_{jj}/p_{jk}$s in \Cref{thm:2-alt-surprise} are very close to the actual ratios $p_{jj}/p_{jk}$s and a small error of the voter in estimating these connection parameters may lead to surprise. However, when the winning margin is large, e.g., $\epsilon$ is large enough such that $\frac{p_{22}}{p_{21}} \frac{1/2 - \epsilon}{1/2 + \epsilon} < 1$ and if the $\estp$'s are also regular, i.e., $\estp_{22}>\estp_{21}$, then no agent in $P_2$ will be surprised. This shows that elections with an overwhelming majority can hardly be surprising. Surprise is a phenomenon only of a {\em closely contested election}.
\end{enumerate}

\subsection{Three Candidates}
\label{sec:three-candidates}

We now consider the problem with three candidates. In this setting, different voting rules give rise to different winners and therefore it is possible to distinguish them w.r.t.\ the surprise metric. In this section, we will compare three voting rules, namely plurality, Borda, and veto (explained below), based on the factor $\mpfb_v^r$ (\Cref{eq:union-bound}) because of the reason explained right after the equation in \S\ref{sec:model}. 
A collection of $m$-dimensional vectors $\vec{s}_m=\left(\alpha_1,\alpha_2,\dots,\alpha_m\right)\in\mathbb{R}^m$ 
 with $\alpha_1\geqslant\alpha_2\geqslant\dots\geqslant\alpha_m$ and $\alpha_1>\alpha_m$ for every $m\in \mathbb{N}$ defines a voting rule (called scoring rule) --- a candidate receives a score of $\alpha_i$ from a vote if it is placed at the $i$-th position in that vote, and the  score of a candidate is the sum of the scores it receives from all the votes. The winners are the candidates with the maximum score. The score vectors for the plurality, Borda, and veto voting rules are $(1,0,\ldots,0)$, $(m-1,m-2,\ldots,1,0)$, and $(1,\ldots,1,0)$ respectively. Scoring rules remain unchanged if we multiply every $\alpha_i$ by any constant $\lambda>0$ and/or add any constant $\mu$. Hence, we assume without loss of generality that, for $3$ candidates, the Borda score vector is $(\nfrac{2}{3}, \nfrac{1}{3},0)$ and the veto score vector is $(\nfrac{1}{2},\nfrac{1}{2},0)$ to ensure that $\sum_{i=1}^3 \alpha_i = 1$ for all the rules.
 
 We chose these three voting rules because (1) they are most frequently used, and (2) the distribution of scores in these rules has wide variety -- the whole score concentrated at the top alternative for plurality, (almost) equally distributed for veto, and in between these two extremes for Borda.
 

%

For two candidates, we have seen that surprise occurs only in closely contested elections. Hence to compare the voting rules in this section, we consider that the voters are uniformly distributed over the $|C|$ preference classes.

\begin{assumption}[Uniform Population]\label{asm:unif_pop}
 Every voter belongs to exactly one class of preference in $\{P_k : k \in C\}$ with uniform probability.
\end{assumption}
We also assume that the voters' estimates of the connection probabilities are consistently higher than their true values as the KT distance increases between the preference class of the voter and the class of her neighbor, i.e., $p_{ij}/\estp_{ij}$'s are decreasing in ${\tt dist}_{\tt KT}(P_i,P_j)$. The motivation is to capture the fact that people often consider their local neighborhood to be representative of the global population, leading to an uniform $\estp_{ij}$'s for all $i,j \in C$. Since the true connection probabilities are regular, i.e., decreasing in ${\tt dist}_{\tt KT}(P_i,P_j)$, it gives rise to a {\em monotone estimation error}.


\begin{assumption}[Monotone Estimation Error (MEE)]\label{asm:mee}
%
 The ratio of the true connection probability to the estimated one decreases with the KT distance, i.e., $\frac{p_{k \ell}}{\estp_{k \ell}} \geqslant \frac{p_{k p}}{\estp_{k p}}$ when ${\tt dist}_{\tt KT}(P_k,P_\ell) < {\tt dist}_{\tt KT}(P_k,P_p)$ when $v \in P_k$.
\end{assumption}

In the proof of our main result in this section, we will use a quantitative version of the central limit theorem due to \cite{berry1941accuracy} and \cite{esseen1942liapounoff}. The following exposition is from \cite{tao}.

\begin{theorem}[Berry-Esseen]
\label{thm:be}
 Let $X$ be a random variable with mean $\mu$, unit variance, and finite third moment. Let $Z_n = \frac{\sum_{i=1}^n X_i}{\sqrt{n}}$, where $X_i$'s are i.i.d.\ copies of $X$. Then we have $\Pr [Z_n > \lambda]= \Pr[ G > \lambda ] + \OO( \nfrac{\EB|X|^3}{\sqrt{n}})$,
uniformly for all $\lambda\in\RB$, where $G \equiv \text{Normal}(\mu,1)$, and the implied constant in $\OO(\cdot)$ is absolute and does not depend on the distribution of $X$. 
\end{theorem}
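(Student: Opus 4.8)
The statement is the classical Berry--Esseen theorem, so the plan is to reproduce the standard Fourier-analytic proof rather than to invent a new argument. First I would center, taking $\mu=0$ without loss of generality (replace $X$ by $X-\mu$), so that $\rho := \EB|X|^3$ and it suffices to bound $\Delta_n := \sup_{x\in\RB}|F_n(x)-\Phi(x)|$, where $F_n$ is the distribution function of the standardized sum $S_n = n^{-1/2}\sum_{i=1}^n X_i$ (which has mean $0$ and variance $1$), $\Phi$ is the standard normal distribution function, and $G\equiv\text{Normal}(0,1)$. The tail form stated in the theorem then follows from $\Pr[\,\cdot>\lambda\,]=1-F(\lambda)$. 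Writing $\phi(t):=\EB e^{itX}$ for the characteristic function of $X$, independence gives that $\phi(t/\sqrt n)^n$ is the characteristic function of $S_n$, which I will compare with $e^{-t^2/2}$, the characteristic function of $G$.

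The main tool is Esseen's smoothing inequality, which converts a bound on characteristic functions into a bound on $\Delta_n$. It states that for any $T>0$,
\[
\Delta_n \;\le\; \frac{1}{\pi}\int_{-T}^{T}\left|\frac{\phi(t/\sqrt n)^n - e^{-t^2/2}}{t}\right|dt \;+\; \frac{c_0}{T},
\]
where the second term arises because the standard normal density is bounded by $(2\pi)^{-1/2}$ and $c_0$ is an absolute constant. I would either cite this lemma or prove it quickly by convolving $F_n-\Phi$ with a fixed smoothing kernel (a Fej\'er-type kernel) whose Fourier support lies in $[-T,T]$, so that the inversion formula expresses the smoothed difference through the characteristic functions on $[-T,T]$ alone, with the smoothing error controlled by the bounded density of $\Phi$.

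Next I would control the integrand. A Taylor expansion using the first three moments gives $\phi(s)=1-s^2/2+\theta(s)$ with $|\theta(s)|\le \tfrac{\rho}{6}|s|^3$, and for $|t|\le c_1\sqrt n/\rho$ this yields both $|\phi(t/\sqrt n)|\le e^{-t^2/4n}$ and, via the elementary bound $|a^n-b^n|\le n\max(|a|,|b|)^{n-1}|a-b|$ applied to $a=\phi(t/\sqrt n)$ and $b=e^{-t^2/2n}$, an estimate of the form $|\phi(t/\sqrt n)^n-e^{-t^2/2}|\le C\,\rho\,|t|^3 n^{-1/2}\,e^{-t^2/8}$. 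Choosing the cutoff $T=c_1\sqrt n/\rho$, the integral is then dominated by $\tfrac{\rho}{\sqrt n}\int_{\RB}|t|^2e^{-t^2/8}\,dt=\OO(\rho/\sqrt n)$, while the smoothing tail contributes $c_0/T=\OO(\rho/\sqrt n)$ as well; adding the two gives $\Delta_n=\OO(\rho/\sqrt n)$, which is exactly the claimed uniform error term with an absolute implied constant.

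The delicate step is the uniform control of $|\phi(t/\sqrt n)^n-e^{-t^2/2}|$ across the whole range $|t|\le T$: near $t=0$ the estimate is governed by the cubic Taylor remainder, whereas for moderately large $t$ one must instead lean on the Gaussian-type decay bound $|\phi(t/\sqrt n)|\le e^{-t^2/4n}$ to keep the $n$-th power from blowing up, and the two regimes must be patched so that the cutoff choice $T\asymp\sqrt n/\rho$ makes the integral and the $1/T$ smoothing error land at the \emph{same} order $\rho/\sqrt n$. Keeping the implied constant absolute (independent of the law of $X$) throughout---rather than letting it depend on higher moments---is the feature that requires the most care, and is precisely what distinguishes Berry--Esseen from the qualitative central limit theorem.
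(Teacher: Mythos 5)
The paper never proves this theorem: it is imported as a classical result of Berry (1941) and Esseen (1942), with the statement lifted from Tao's exposition, and it is then used as a black box in the proof of the three-candidate theorem. So there is no ``paper proof'' to match; what you have written is the standard Fourier-analytic proof of Berry--Esseen, and as a sketch it is sound. The three ingredients are all in place and correctly assembled: Esseen's smoothing inequality reducing $\sup_\lambda|F_n(\lambda)-\Phi(\lambda)|$ to an integral of $\bigl|\varphi(t/\sqrt n)^n-e^{-t^2/2}\bigr|/|t|$ over $|t|\le T$ plus a $c_0/T$ term; the cubic Taylor bound $|\varphi(s)-(1-s^2/2)|\le \tfrac{\rho}{6}|s|^3$ together with $|\varphi(t/\sqrt n)|\le e^{-t^2/4n}$ on $|t|\le c_1\sqrt n/\rho$; and the telescoping bound $|a^n-b^n|\le n\max(|a|,|b|)^{n-1}|a-b|$ with $b=e^{-t^2/2n}$ (so $b^n=e^{-t^2/2}$ exactly), which yields the integrable majorant $C\rho|t|^3n^{-1/2}e^{-t^2/8}$ on the whole range and makes the cutoff $T\asymp\sqrt n/\rho$ balance both error terms at $\OO(\rho/\sqrt n)$. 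Two small remarks. First, your opening reduction ``replace $X$ by $X-\mu$'' silently fixes a defect in the paper's own statement rather than in your argument: as literally written, with $\EB X=\mu\ne 0$ and $Z_n=\sum_i X_i/\sqrt n$, the sum concentrates around $\sqrt n\,\mu$, not $\mu$, so comparing against $\text{Normal}(\mu,1)$ is false for $n\ge 2$; the correct statement (Tao's original) takes $\mu=0$, which is what your centered version proves, and that is also the only regime in which the paper actually invokes the theorem. Second, after centering your $\rho=\EB|X-\mu|^3$ differs from the stated $\EB|X|^3$, but $\|X-\mu\|_3\le 2\|X\|_3$ absorbs this into the absolute constant, so no gap results.
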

This theorem gives a quantitative guarantee on the deviation of the cumulative distribution function of the random variable $Z_n$ from that of a normal random variable with mean same as $X$ and unit variance.


With the assumptions as mentioned above, we present our main result for three candidates in the following theorem. Informally, this theorem compares plurality, Borda, and veto voting rules based on \mpfb factor. Since $\surp_v^r = \Theta(\mpfb_v^r)$ (\Cref{eq:union-bound}), we conclude that a lower MPFB factor gives a lower surprise.

\begin{theorem}\label{thm:3-candidate}
 Consider $|M|=3$, and voters are generated from an uniform population. Let $v$ be any voter.
 \begin{enumerate}[label=(\roman*)]
  \item If $v$ ranks the true winner at the first position, then $\mpfb_v^\plu \leqslant \mpfb_v^\bor \leqslant \mpfb_v^\veto$ \whp.
  \item If $v$ ranks the true winner at the second position, then $\mpfb_v^\veto \leqslant \mpfb_v^\bor \leqslant \mpfb_v^\plu$ \whp.
  \item If $v$ ranks the true winner at the last position, then $\mpfb_v^\veto \leqslant \mpfb_v^\bor$ and $\mpfb_v^\plu \leqslant \mpfb_v^\bor$ \whp.
 \end{enumerate}
\end{theorem}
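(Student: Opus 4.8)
The plan is to reduce each of the six inequalities in \Cref{thm:3-candidate} to a comparison of lower-tail probabilities of sums of independent bounded random variables, and then to use \Cref{thm:be} to convert those tails into Gaussian tails whose decay is governed by a single standardized quantity per voting rule. Fix a voter $v$ and, without loss of generality, take her preference to be $a_1 \succ a_2 \succ a_3$; by \Cref{eq:union-bound} it suffices to control, for each rule $r \in \{\plu,\bor,\veto\}$, the factor $\mpfb_v^r = \Pr(\beat_v^r({b_v^r}^*, w_T))$. Writing $s^r_k(\cdot)$ for the score a class-$k$ ballot awards under rule $r$, and $w = w_T(\Nvec,r)$ for the (rule-specific) true winner, the perceived score difference $D_b^r := \estscore_r(b) - \estscore_r(w)$ expands via \Cref{eq:estimated_N} as $D_b^r = \sum_{u\ne v} Y_u + c_v$, where the $Y_u = \frac{1}{\estp_{\sigma(v)\sigma(u)}}\big(s^r_{\sigma(u)}(b) - s^r_{\sigma(u)}(w)\big)\mathbb{I}\{(vu)\in E\}$ are i.i.d.\ and bounded, and $c_v = \OO(1)$ is the self-contribution. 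Writing $\rho_k := p_{\sigma(v)k}/\estp_{\sigma(v)k}$, I would compute $\mu^r_b := \EB D_b^r = \Theta(n)$ and $(\sigma^r_b)^2 := \mathrm{Var}(D_b^r) = \Theta(n)$ in closed form as explicit combinations of the $\rho_k$ and the $\estp$'s, exactly in the spirit of the Hoeffding computation in \Cref{thm:2-alt-surprise}.

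Next I would pin down the most probable false beating candidate in each case. Under \Cref{asm:unif_pop} and \Cref{asm:mee} the $\rho_k$ are monotone decreasing in the KT distance from $v$'s class, so in expectation the perceived scores order as $a_1 \succeq a_2 \succeq a_3$ for all three rules; hence $v$'s own top choice is her perceived winner \whp. Conditioning on the rank of $w$ then fixes the relevant comparison. In case (i), $w=a_1$, the closest competitor $a_2$ is the MPFB challenger, $\mu^r_{a_2}>0$, and $\mpfb_v^r = \Pr(D^r_{a_2}<0)\to 0$. In cases (ii) and (iii), $w\in\{a_2,a_3\}$ is perceived to lose, the MPFB challenger is $a_1$ with $\mu^r_{a_1}>0$, and $\mpfb_v^r\to 1$, so the informative small quantity becomes $1-\mpfb_v^r = \Pr(D^r_{a_1}<0)$.

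In every case the object to estimate is a lower tail $\Pr(D^r_{b^*}<0)$ with positive mean $\mu := \mu^r_{b^*}$. Normalizing $D^r_{b^*}$ to unit variance and applying \Cref{thm:be} gives $\Pr(D^r_{b^*}<0) = \Pr[G > \mu/\sigma^r_{b^*}] + \OO(n^{-1/2})$, so the decay is controlled by the standardized gap $g_r := \mu^r_{b^*}/\sigma^r_{b^*} = \Theta(\sqrt n)$. Since $\Pr[G>\cdot]$ is strictly decreasing, each claimed inequality between two MPFB factors is equivalent to an inequality between the corresponding $g_r$ (with the sign bookkeeping of the previous paragraph: in case (i) a smaller $\mpfb$ corresponds to a larger $g_r$, while in cases (ii)--(iii) a smaller $\mpfb$ corresponds to a smaller $g_r$). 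Substituting the closed forms for $\mu^r_{b^*}$ and $\sigma^r_{b^*}$, each of the six inequalities becomes a purely algebraic statement about the ordered ratios $\rho_0\ge\rho_1\ge\rho_2\ge\rho_3$ (the common value of $\rho_k$ at KT distances $0,1,2,3$), which I would then verify for every MEE-consistent choice of parameters.

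The hard part will be precisely this parameter-uniform algebraic comparison, for two reasons. First, the variance $\sigma^r_{b^*}$ is genuinely rule-dependent --- plurality forces each per-voter contribution into $\{0,1\}$ while veto and Borda spread the swing across all ballots --- so the ordering of the $g_r$ is \emph{not} the same as the ordering of the mean gaps $\mu^r_{b^*}$, and the inequalities must be checked with mean and variance simultaneously rather than by a mean-margin heuristic. Second, the $\OO(n^{-1/2})$ Berry--Esseen error dominates the exponentially small tails themselves, so to make the argument rigorous I would compare not the tails directly but the rate functions $g_r^2/2$ (equivalently the $g_r$), concluding that the ordering of the $\mpfb_v^r$ is dictated by their leading exponents; this is the sense in which ``a lower MPFB factor gives a lower surprise'' is to be read. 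Carrying the $g_r$-inequalities over the entire MEE-admissible region, rather than at isolated parameter values, is where the substantive work lies.
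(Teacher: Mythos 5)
Your framework is the same as the paper's: decompose each perceived score difference into an i.i.d.\ sum plus an $\OO(1)$ self-term, compute its mean and variance, normalize, invoke \Cref{thm:be}, and reduce the comparison of $\mpfb$ factors across rules to a comparison of standardized means (your $g_r=\mu/\sigma$), with a maximum over the two challengers. Your remark that the $\OO(n^{-1/2})$ Berry--Esseen error swamps the exponentially small differences between the tails, so that the claimed orderings must really be read off the Gaussian surrogates (equivalently the rate functions), is a fair sharpening of a point the paper itself glosses over.

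The gap is that you stop exactly where the theorem lives. In your own reduction, all six inequalities are equivalent to algebraic orderings of the $g_r$, and you never compute the per-rule means and variances in the three cases nor verify a single one of those orderings --- you explicitly defer this as ``where the substantive work lies.'' But that computation \emph{is} the proof: for instance, in case (ii) everything hinges on $s_1-s_2$ being $1$, $\nicefrac{1}{3}$, $0$ for $\plu$, $\bor$, $\veto$ respectively, which forces the veto mean to vanish (so $\mpfb_v^\veto$ stays near $\nicefrac{1}{2}$ while the other two tend to $1$), and cases (i)/(iii) require explicit formulas for the maximum of the two challengers' standardized means under each rule. Two further points would bite if you carried the plan out. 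First, you assert rather than derive that the ``closest competitor'' is the MPFB challenger in each case; the paper instead computes and compares $\max\{\mu_{v,a_1-a_2}^r,\mu_{v,a_3-a_2}^r\}$ per rule, precisely because the identity of the argmax is rule-dependent and must fall out of the algebra. Second, your parameterization with four ordered ratios $\rho_0\geqslant\rho_1\geqslant\rho_2\geqslant\rho_3$ is strictly more general than what the paper's proof actually uses (it sets all inter-class quantities to a common $p_{12}/\estp_{12}$); in your generality the clean structure disappears --- e.g.\ the veto mean in case (ii) becomes proportional to $\rho_1-\rho_3\geqslant 0$ rather than exactly $0$ --- so it is not evident the claimed orderings even hold over the whole MEE-admissible region, and your plan offers no argument for that. (There is also a sign slip: with $D_b^r:=\estscore_r(b)-\estscore_r(w)$ and $w=a_1$ in case (i), one has $\mu^r_{a_2}\leqslant 0$ and $\mpfb_v^r=\Pr(D^r_{a_2}>0)$, not what you wrote.) In short, the proposal is the right strategy --- the paper's own --- but it is a plan, not a proof.
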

\noindent
{\bf Discussion:} This result gives us a fine grained information regarding the performance on surprise of different voting rules in different voter classes. It is also clear that among these standard voting rules there is no single rule that reduces surprise for all sections of voters. But we find it interesting that the performance on surprise is not proportional to the distribution of scores in the rules, since in case (iii), Borda, that has non-extreme distribution of scores performs worse than both the other two rules having extreme score distributions.
%


\smallskip
\noindent
{\em Brief sketch of the proof:} We assume WLOG, that a specific candidate wins \whp. We consider each of the three cases in the theorem separately. For every case, we prove the claim in three stages.  First, we consider the two `false beating' events where the true winner is {\em not} the perceived winner -- for which we consider the difference in the overall scores (as we are considering only scoring rules) of the other two candidates with that of the true winner. Second, to compute the probability that these two expressions are positive (which implies that these are the false beating events), we find the mean and variance of these expressions and normalize the difference expression with the standard deviation so that the Berry-Esseen theorem (\Cref{thm:be}) can be invoked. Finally, we find the maximum of the two probabilities of false-beating events to compute the $\mpfb_v^r$ for that voting rule. The claim is proved by comparing these $\mpfb$ factors.

\begin{proof}
 Let $M = \{a_1,a_2,a_3\}$. 
 We label the classes as shown in \Cref{table:pref}. Each voter belongs to class $P_k$ w.p. $\nfrac{1}{6}$ in the uniform population model (\Cref{asm:unif_pop}). WLOG, assume that the candidate $a_2$ wins the election \whp., i.e., the overall score is highest for $a_2$ in every rule, and ties are broken in favor of $a_2$.
\begin{table}[h!]
\centering
 \begin{tabular}{rl||rl}
 Class & Preferences & Class & Preferences \\ \hline
 $P_1$: & $a_1 \succ a_2 \succ a_3$ & $P_4$: & $a_2 \succ a_3 \succ a_1$ \\
 $P_2$: & $a_1 \succ a_3 \succ a_2$ & $P_5$: & $a_3 \succ a_1 \succ a_2$ \\
 $P_3$: & $a_2 \succ a_1 \succ a_3$ & $P_6$: & $a_3 \succ a_2 \succ a_1$
 \end{tabular}
 \caption{Preference classes for $3$ candidates}
 \label{table:pref}
\end{table}
 Let $(s_1,s_2,0)$ be a normalized scoring rule vector with $s_1+s_2=1$ and $s_1,s_2 \geqslant 0$. Hence, the vector is $(1,0,0)$, $(2/3,1/3,0)$, and $(1/2,1/2,0)$ respectively for $\plu$, $\bor$, and $\veto$. For a voter $v$, let $\hat{s}_v(a_1), \hat{s}_v(a_2), \hat{s}_v(a_3)$ be the random variables denoting the estimated scores for the candidates $a_1, a_2,$ and $a_3$ perceived by $v$. 
 
For every rule $r$ and voter $v$, we are interested in the differences of these estimated scores, i.e., $\hat{s}_v(a_j)-\hat{s}_v(a_2), \ j = 1,3$, since a positive value of this expression implies that a false beating event has occurred. The maximum probability of these two events is $\mpfb_v^r$.

With the voters' winner perception model, each of these estimated scores of $v$ can be written as a sum over the indicator RVs that another voter belong to a specific preference class and they are connected to $v$ (with appropriate scaling with $\estp_{kl}$ if $v \in P_k$ and the other voter is in $P_l$).
%
 Hence, we can write the difference in the estimated scores as $\hat{s}_v(a_1)-\hat{s}_v(a_2) = \sum_{u\in N \setminus\{v\}} X_{u, a_1-a_2} + \delta_{v,a_1-a_2}$ and $\hat{s}_v(a_3)-\hat{s}_v(a_2) = \sum_{u\in N \setminus\{v\}} X_{u, a_3-a_2} + \delta_{v,a_3-a_2}$, where we clearly distinguish the contribution of voter $v$ in the differences with the variable $\delta_{v,a_j-a_2}, \ j = 1,3$. We denote the summation on the RHS in each equality with the shorthand $S_{-v,a_j-a_2} := \sum_{u\in N \setminus\{v\}} X_{u, a_j-a_2}, \ j = 1,3$.
%
%
 The expression $X_{u,a_1-a_2}$ (resp.\ $X_{u,a_3-a_2}$) is the indicator random variable denoting voter $u$'s contribution to the difference in the score of $a_1$ (resp.\ $a_3$) and $a_2$ if $u$ is connected to $v$.
 We detail out the exact expressions of $X_{u,a_j-a_2}$ when we consider the following cases.

\smallskip
 \noindent {\bf Case 1: $v\in P_1$ or $v\in P_6$ (i.e., when $v$ ranks the winner at the second position): } We only consider $v\in P_1$, since the analysis for $v\in P_6$ is symmetric. For $v\in P_1$, the expression of $X_{u,a_1-a_2}$ turns out as follows for $u\in N \setminus\{v\}$.
 \begin{gather}
   \begin{align*}
  &X_{u,a_1-a_2}= \\&(s_1 - s_2) \left ( \frac{1}{\estp_{11}} \mathbb{I} (\{(u,v) \in E\} \cap \{u \in P_1\}) - \frac{1}{\estp_{12}} \mathbb{I} (\{(u,v) \in E\} \cap \{u \in P_3\}) \right ) \\
  &\quad+ s_2 \left ( \frac{1}{\estp_{12}} \mathbb{I} (\{(u,v) \in E\} \cap \{u \in P_5\}) - \frac{1}{\estp_{12}} \mathbb{I} (\{(u,v) \in E\} \cap \{u \in P_6\}) \right ) \\ 
  &\quad+ s_1 \left ( \frac{1}{\estp_{12}} \mathbb{I} (\{(u,v) \in E\} \cap \{u \in P_2\}) - \frac{1}{\estp_{12}} \mathbb{I} (\{(u,v) \in E\} \cap \{u \in P_4\}) \right ).
 \end{align*}
 \end{gather}
Note that these are i.i.d.\ random variables for $u\in N \setminus\{v\}$, whose mean and variances are as follows.
 \[\EB[X_{u,a_1-a_2}] = (s_1 - s_2) (\nfrac{p_{11}}{6\estp_{11}} - \nfrac{p_{12}}{6\estp_{12}})\geqslant 0\]
 We get the equality due to \Cref{asm:unif_pop} and the inequality due to \Cref{asm:mee}. We also have
 \begin{gather}
  \begin{align*}
  \EB[X_{u,a_1-a_2}^2] &= (s_1-s_2)^2 \left(\nfrac{p_{11}}{6\estp_{11}^2} + \nfrac{p_{12}}{6\estp_{12}^2} \right) +(s_1^2 + s_2^2) \nfrac{p_{12}}{3\estp_{12}^2}.
 \end{align*}
 \end{gather}
 Hence
 \begin{gather}
 \begin{align*}
  \lefteqn{\text{var}(X_{u,a_1-a_2}) = \EB[X_{u,a_1-a_2}^2] -  \left(\EB[X_{u,a_1-a_2}] \right)^2} \\  
  &= (s_1-s_2)^2\left(\frac{p_{11}}{6\estp_{11}^2} + \frac{p_{12}}{6\estp_{12}^2} - \left(\frac{p_{11}}{6\estp_{11}} - \frac{p_{12}}{6\estp_{12}}\right)^2\right) + \frac{p_{12}}{3\estp_{12}^2}(s_1^2 + s_2^2).
 \end{align*}
 \end{gather}
%
%
 For $u\in N \setminus\{v\}$, define the normalized random variable
 \[ \bar{X}_{u,a_1-a_2} = \nfrac{X_{u,a_1-a_2}}{\sqrt{\text{var}(X_{u,a_1-a_2})}}. \]
 Clearly, $\EB[\bar{X}_{u,a_1-a_2}] = \nfrac{\EB[X_{u,a_1-a_2}]}{\sqrt{\text{var}(X_{u,a_1-a_2})}}$ and $\text{var}(\bar{X}_{u,a_1-a_2})=1$. We can now apply \Cref{thm:be} for large $n$ to get
 \begin{equation}
 \label{eqn:ab1}
  \begin{split}
  &\Pr[S_{-v,a_1-a_2} + \delta_{1,a_1-a_2}>0 \given v \in P_1]\\
  &= \Pr \left[\frac{S_{-v,a_1-a_2}}{\sqrt{n \text{var}(X_{u,a_1-a_2})}} + \frac{\delta_{1,a_1-a_2}}{\sqrt{n \text{var}(X_{u,a_1-a_2})}}>0 \given v\in P_1 \right]\\
  &=\Pr \left[ G_{v,a_1-a_2} >  -\nfrac{\delta_{1,a_1-a_2}}{\sqrt{n \text{var}(X_{u,a_1-a_2})}} \right] + \OO \left( \nfrac{1}{\sqrt{n}} \right)\\
  &=\Pr [ G_{v,a_1-a_2} \geqslant  0] + \OO \left( \nfrac{1}{\sqrt{n}} \right). 
 \end{split}
 \end{equation}
 Where $G_{v,a_1-a_2}$ is a normal RV with mean $\EB[\bar{X}_{u,a_1-a_2}]$ and unit variance. 
 The last equality follows from the fact that $\Pr \left[0 > G_{v,a_1-a_2} > -\nfrac{\delta_{1,a_1-a_2}}{\sqrt{n \text{var}(X_{u,a_1-a_2})}} \right] = \OO( \nfrac{1}{\sqrt{n}})$ since the length of the interval $\left[ -\nfrac{\delta_{1,a_1-a_2}}{\sqrt{n \text{var}(X_{u,a_1-a_2})}}, 0 \right]$ is $\OO( \nfrac{1}{\sqrt{n}})$, hence the integral of any probability distribution over it is $\OO( \nfrac{1}{\sqrt{n}})$.
 
 Similarly, for $u\in N \setminus\{v\}$, $X_{u,a_3-a_2}$ is defined as follows.
 \begin{gather}
  \begin{align*}
  &X_{u,a_3-a_2} = \\
  &(s_1 - s_2) \left ( \frac{1}{\estp_{12}} \mathbb{I} (\{(u,v) \in E\} \cap \{u \in P_6\}) - \frac{1}{\estp_{12}} \mathbb{I} (\{(u,v) \in E\} \cap \{u \in P_4\}) \right) \\
  &\quad+ s_2 \left ( \frac{1}{\estp_{12}} \mathbb{I} (\{(u,v) \in E\} \cap \{u \in P_2\}) - \frac{1}{\estp_{11}} \mathbb{I} (\{(u,v) \in E\} \cap \{u \in P_1\}) \right ) \\ 
  &\quad+ s_1 \left ( \frac{1}{\estp_{12}} \mathbb{I} (\{(u,v) \in E\} \cap \{u \in P_5\}) - \frac{1}{\estp_{12}} \mathbb{I} (\{(u,v) \in E\} \cap \{u \in P_3\}) \right ).
 \end{align*}
 \end{gather}
 Taking expectation, we get
 \[\EB[X_{u,a_3-a_2}] = s_2(\nfrac{p_{12}}{6\estp_{12}} - \nfrac{p_{11}}{6\estp_{11}})\leqslant 0\]
 The equality follows due to \Cref{asm:unif_pop} and the inequality due to \Cref{asm:mee}. 
 Performing similar calculation as we did for $X_{u,a_1-a_2}$, we reach a unit variance normal RV $G_{v,a_3-a_2}$. However, the mean of $G_{v,a_1-a_2}$ turns out to be larger than $G_{v,a_3-a_2}$, which lead to the conclusion that for large $n$
 \begin{align*}
  &\Pr[S_{-v,a_1-a_2} + \delta_{1,a_1-a_2}>0 \given v \in P_1]\\
  &\geqslant \Pr[S_{-v,a_3-a_2} + \delta_{1,a_3-a_2}>0 \given v\in P_1]
 \end{align*}
Hence, to find the MPFB factor in this case, we need to compare the probability of \Cref{eqn:ab1} among different voting rules. Since, the probability reduces to the tail distribution of $G_{v,a_1-a_2}$ which is a normal RV with unit variance, it is enough to compare the means of $G_{v,a_1-a_2}$ to compare the MPFB factors. Denoting the means of $G_{v,a_1-a_2}$ by $\mu_v^r$ for voter $v$ under rule $r$, we get for large enough $n$
 \[\mu_v^\veto \leqslant \mu_v^\bor \leqslant \mu_v^\plu. \]
 Which implies \whp.
 \[\mpfb_v^\veto \leqslant \mpfb_v^\bor \leqslant \mpfb_v^\plu.\]
 
 The analysis for $v\in P_6$ is the same with the roles of candidates $a_1$ and $a_3$ being reversed. Hence, we have proved claim (ii) of the theorem.
 
 
%
 \smallskip
 \noindent {\bf Case 2: $v\in P_2$ or $v\in P_5$ (i.e., when $v$ ranks the winner at the last position): } 
 We adopt a similar calculation as Case 1 to get
 \begin{align*}
 \EB[X_{u,a_1-a_2}] &= s_1 (\nfrac{p_{11}}{6\estp_{11}} - \nfrac{p_{12}}{6\estp_{12}})\geqslant 0\\
  \EB[X_{u,a_1-a_2}^2] &= s_1^2(\nfrac{p_{11}}{6\estp_{11}^2} + \nfrac{p_{12}}{6\estp_{12}^2}) + \nfrac{p_{12}}{3\estp_{12}^2}((s_1-s_2)^2 + s_2^2)
 \end{align*}
 With notations similar to Case 1, we denote the mean of the normalized variance normal RV $G_{v,a_j-a_2}$ by $\mu_{v,a_j-a_2}^r, \ j = 1,3$, for the differences of estimated scores of voter $v$ between candidates $a_j$ and $a_2, \ j = 1,3$.
 Hence
 \begin{align*}
  \mu_{v,a_j-a_2}^r = \EB \left [\frac{S_{-v,a_j-a_2}}{\sqrt{n \text{var}(X_{u,a_j-a_2}})} \given v\in P_2 \right], \ j =1,3.
 \end{align*}
With similar computations, we get for voter $v \in P_2$
\begin{gather}
  \begin{align*}
  &\max\{ \mu_{v,a_1-a_2}^\plu, \mu_{v,a_3-a_2}^\plu\} =\max\{\mu_{v,a_1-a_2}^\veto,\mu_{v,a_3-a_2}^\veto\} \leqslant \max\{\mu_{v,a_1-a_2}^\bor,\mu_{v,a_3-a_2}^\bor\}
 \end{align*}
\end{gather}
 Which implies \whp.
 \begin{align*}
  &\mpfb_v^\veto \leqslant \mpfb_v^\bor \quad \text{ and } \quad \mpfb_v^\plu \leqslant \mpfb_v^\bor.
 \end{align*}
 The case for $v \in P_5$ is same with the roles of candidates $a_1$ and $a_3$ reversed. Hence, we have proved claim (iii) of the theorem.

\smallskip
 \noindent {\bf Case 3: $v\in P_3$ or $v\in P_4$ (i.e., when $v$ ranks the winner at the first position): }
 With notations similar to Case 1, and denoting the mean of the normalized variance normal RV $G_{v,a_j-a_2}$ by $\mu_{v,a_j-a_2}^r, \ j = 1,3$, for the differences of estimated scores of voter $v$ between candidates $a_j$ and $a_2, \ j = 1,3$, we have
 \begin{align*}
  \mu_{v,a_j-a_2}^r = \EB \left [\frac{S_{-v,a_j-a_2}}{n \sqrt{\text{var}(X_{u,a_j-a_2}})} \given v\in P_3 \right], \ j =1,3.
 \end{align*}
With similar computations, we get for voter $v \in P_3$
 \begin{gather}
 \begin{align*}
  &\max\{ \mu_{v,a_1-a_2}^\plu, \mu_{v,a_3-a_2}^\plu\} \leqslant \max\{\mu_{v,a_1-a_2}^\veto,\mu_{v,a_3-a_2}^\veto\} \leqslant \max\{\mu_{v,a_1-a_2}^\bor,\mu_{v,a_3-a_2}^\bor\}
 \end{align*}
 \end{gather}
 Which implies \whp.
 \begin{align*}
 \mpfb_v^\plu \leqslant \mpfb_v^\bor \leqslant \mpfb_v^\veto.
 \end{align*}
 The case for $v \in P_4$ is same with the roles of candidates $a_1$ and $a_3$ reversed. Hence, we have proved claim (i) of the theorem.
%
\end{proof}

\section{Empirical Results}
\label{sec:empirical}

Our theoretical results in \S\ref{sec:theory} use some simplifying assumptions in the interest of a cleaner analysis. Firstly, we assumed that the voters have an estimate of the connection probabilities, though we do not explicitly mention how the voters arrive at these estimates. In practice, voters anticipate a winner by implicitly estimating the number of voters voting in favor of the candidate versus voting against him. There are typically two major sources of information to a voter: first, via her own neighbors in the (online/offline) social network, and second via the public broadcasting media -- print or electronic. Secondly, our connection model was following the stochastic block model that only depends on the voters' preferences and had no dependence on the voters' geographical locations. In this section, we relax these two simplifying assumptions and from an empirical viewpoint try to see if the broad theoretical predictions hold. 

We instantiate the voting population with a real election dataset. We construct the social network of voters depending on their preferences and geographical locations to make the network more realistic. We capture a voter $v$'s estimates of the population of different classes by taking a weighted average of (1) voter $v$'s individual observation, i.e., the number of voters of different classes in $v$'s immediate neighborhood and (2) a noisy version of the global (true) number of voters in each class. Effects (1) and (2) capture a voter's private and public observations respectively and give a realistic view of opinion forming.


\begin{figure*}[h!]
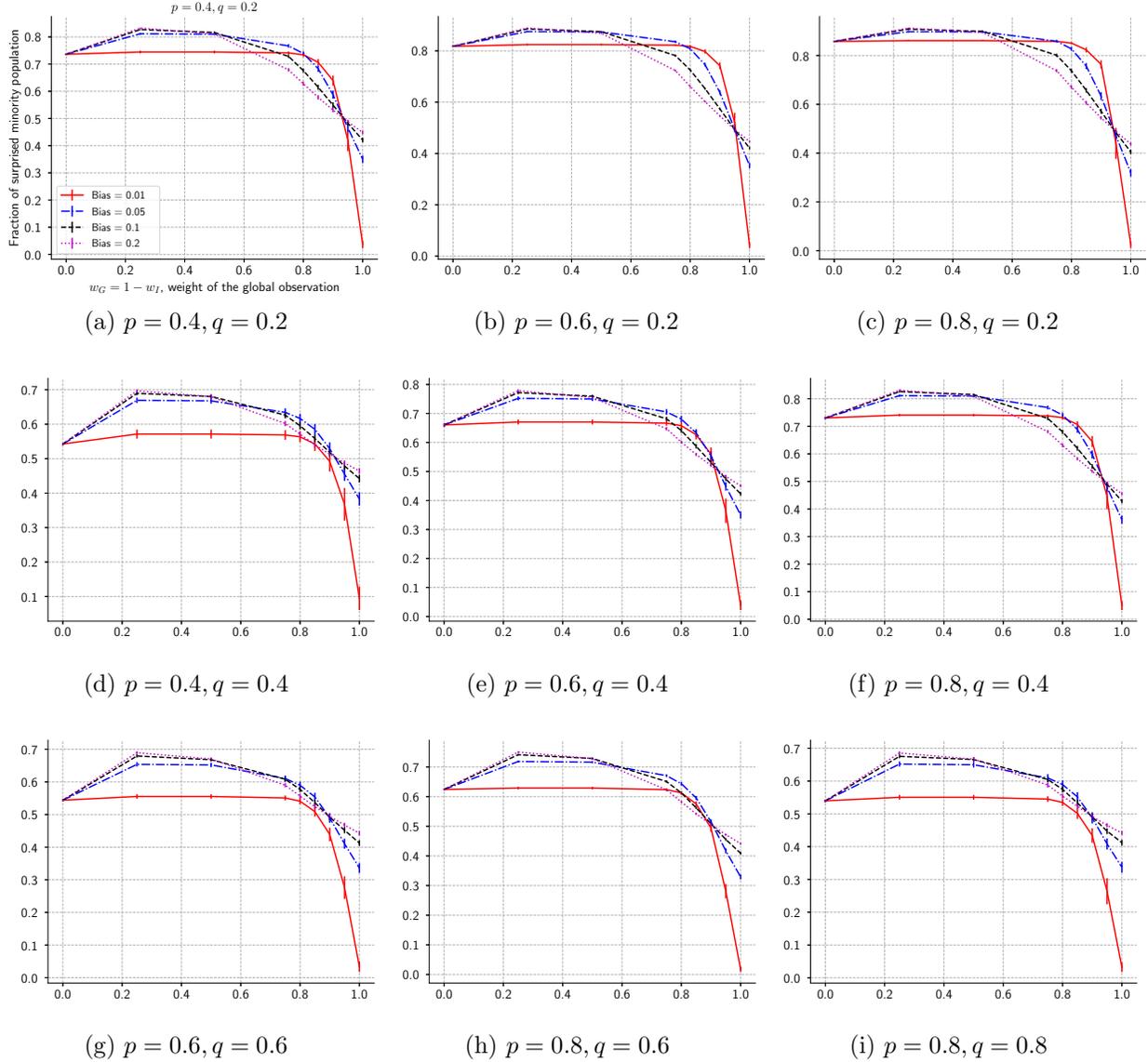

        \begin{subfigure}[b]{0.33\textwidth}
                \scalebox{\width}{
                \input{shock_4_2_full.pgf}}
                \caption{$p=0.4, q=0.2$}
                \label{subfig_4_2}
        \end{subfigure} %
        \begin{subfigure}[b]{0.33\textwidth}
                \scalebox{\width}{\input{shock_6_2_noedge.pgf}}
                \caption{$p=0.6, q=0.2$}
                \label{subfig_6_2}
        \end{subfigure}%
        \begin{subfigure}[b]{0.33\textwidth}
                \scalebox{\width}{\input{shock_8_2_noedge.pgf}}
		\caption{$p=0.8, q=0.2$}
                \label{subfig_8_2}
        \end{subfigure}%
        \\
        \begin{subfigure}[b]{0.33\textwidth}
                \scalebox{\width}{\input{shock_4_4_noedge.pgf}}
                \caption{$p=0.4, q=0.4$}
                \label{subfig_4_4}
        \end{subfigure}%
        \begin{subfigure}[b]{0.33\textwidth}
                \scalebox{\width}{\input{shock_6_4_noedge.pgf}}
                \caption{$p=0.6, q=0.4$}
                \label{subfig_6_4}
        \end{subfigure}%
        \begin{subfigure}[b]{0.33\textwidth}
                \scalebox{\width}{\input{shock_8_4_noedge.pgf}}
		\caption{$p=0.8, q=0.4$}
                \label{subfig_8_4}
        \end{subfigure}%
        \\
        \begin{subfigure}[b]{0.33\textwidth}
                \scalebox{\width}{\input{shock_6_6_noedge.pgf}}
                \caption{$p=0.6, q=0.6$}
                \label{subfig_6_6}
        \end{subfigure}%
        \begin{subfigure}[b]{0.33\textwidth}
                \scalebox{\width}{\input{shock_8_6_noedge.pgf}}
                \caption{$p=0.8, q=0.6$}
                \label{subfig_8_6}
        \end{subfigure}%
        \begin{subfigure}[b]{0.33\textwidth}
                \scalebox{\width}{\input{shock_8_8_noedge.pgf}}
		\caption{$p=0.8, q=0.8$}
                \label{subfig_8_8}
        \end{subfigure}%
        \caption{Surprise in Brexit for different intra and inter-class connection probabilities (legends same as \Cref{subfig_4_2}).}
    \label{fig:shock-consolidated}
    \vspace{-4mm}
\end{figure*}

\smallskip
\noindent {\em Datasets:} We use the UK election dataset of EU referendum (popularly known as Brexit)\footnote{\tt https://goo.gl/MtTdIT}. The dataset is publicly available and gives the total count of votes cast by the UK voters that voted either `remain' (R) or `leave' (L) the EU. The data consists of approximately $33$ million valid votes and is partitioned across 382 regions within the UK. Each region is identified with the name of the town, city, or county. We will refer to this dataset as Brexit dataset. We have used another dataset\footnote{https://www.townslist.co.uk/} to find the latitude and longitude of these regions. Since the location dataset gives the latitude-longitude of a town and the voting constituencies are collection of a number of them, we have averaged over the towns in a region to find the approximate centroid of the region. There were few locations (about 18\%) whose information were not available in the location dataset, we have filled in their location to be the centroid of all the available locations in the dataset.
The Brexit data are suitable for our experiment, since (a) it has only two candidates for which we have a simple yet insightful theoretical result (\Cref{thm:2-alt-surprise}), (b) it is large enough to draw conclusions on large-scale elections, and (c) the election was closely contested, (51.9\% for L and 48.1\% for R).

\smallskip
\noindent {\em Approach:} In each location, based on the total number of voters and their votes, we re-created the voters. The connection follows a random graph model where the probability of connection between two voters is the average of (a) $p_1$, which is decreasing in the geographical distance between the voters, and (b) $p_2$, which is $p$ if both voters are from the same class, and $q$, otherwise (with $p \geqslant q$). This relaxation from the theoretical model allows for a social connection where two individuals are geographically close despite having different political opinions.

In this social network, voters perceive the outcome of the election according to effects (1) and (2) as explained before. For the individual observation (effect 1), we assume that a voter can perfectly observe the true voting preferences of her immediate neighbors in the graph. The number of voters that voted R or L in the immediate neighborhood gives a distribution of the R and L voters in the neighborhood including herself. For the global observation (effect 2), we add a zero mean truncated Gaussian noise to the {\em true} distribution of the votes -- the truncated Gaussian is set such that after the addition of noise, the resulting noisy distribution still remains a valid one, i.e., no probability mass goes negative. We call the variance of the truncated Gaussian the {\em bias} of this observation. The voter combines these two distributions with weights $w_I$ for the noise-free individual distribution and $w_G$ for the noisy global distribution. Her perceived winner is the one that has larger mass among the two outcomes in the weighted sum distribution.

Due to the massive scale of the dataset, which takes significant time to run a single experiment, we have sampled $10,000$ votes uniformly at random and created a sub-election. In this sub-election, every individual attempts to connect to $500$ other individuals picked uniformly at random. In this discussion, we consider the surprise of the voters in the minority class of this sub-election (i.e., the voters whose favorite candidate does not win -- hence they get surprised when they perceive this candidate to be the winner).

\smallskip
\noindent {\em Results:} We have three independent parameters that give rise to surprise: (1) the weight on global observation $w_G$ ($w_I$ is fixed given this), (2) the bias on this observation, and (3) the choices of $p$ and $q$. To show how these parameters affect surprise, we plot the fraction of surprised minority population versus $w_G$ for different choices of observation bias of the global distribution. \Cref{subfig_4_2} shows such a plot for a specific choice of $p$ and $q$. \Cref{subfig_4_2,subfig_6_2,subfig_8_2} show similar information when $\nfrac{p}{q}$ increases.

\smallskip
\noindent{\bf Observations.}
Some results support our theoretical predictions, even after relaxing our assumptions on network formation and voter estimates. (i) When the ratio $\nfrac{p}{q}$ is large, the surprises are large too. A large $\nfrac{p_{22}}{p_{21}}$ implies that more $\estp_{22},\estp_{21}$ satisfies the condition of surprise in part 2 of \Cref{thm:2-alt-surprise} (here $p_{22} = p, p_{21} = q$) -- giving rise to a higher surprise. (ii) More bias in the observation leads to a higher surprise (note, e.g., the $w_G = 1.0$ point in \Cref{subfig_4_2}). This too is expected by \Cref{thm:2-alt-surprise} as a larger bias gives rise to a larger chance that the estimated ratio of $p$ and $q$ will be different from $\nfrac{p}{q}$ -- thereby making the condition of surprise in part 2 of \Cref{thm:2-alt-surprise} gets satisfied more likely.

However, there are a few observations that we find surprising. The downward trend of the curve was expected with more weight on global information -- but when there is noise in the global information, there is an increase and dip in the surprise. Also, each curve shows a cross-over region, where mixing a more noisy global observation gives a lower surprise.

\section{Discussion}

Our results give a quantitative understanding of surprise in elections. We set up a model for voters' preference generation, social network creation, and voters' perception of the winner from their local neighborhood. 
Our results for more than two candidates hint that possibly no single voting rule can reduce the surprise for all sections of the voters.
The empirical results complement our assumption on voter's estimates of connection probabilities. However, a more fine-grained model of voter perception will help better understand the surprise phenomenon. We believe that a thorough understanding of surprise is essential for mitigating it -- particularly when such surprises affect the social, economic, and political decisions of individuals.

\subsection*{Acknowledgments}

We are grateful to Debasis Mishra and two anonymous referees for useful discussions and suggestions during an earlier version of the paper. Swaprava Nath is supported by IIT Kanpur faculty initiation grant.

\bibliographystyle{plainnat}
\bibliography{abb,ultimate,palash,swaprava}

\end{document}